\documentclass[conference,letterpaper]{IEEEtran}

\addtolength{\topmargin}{9mm}

%
%
\usepackage[utf8]{inputenc} 
\usepackage[T1]{fontenc}
\usepackage{url}
\usepackage{ifthen}
\usepackage{cite}
\usepackage[cmex10]{amsmath}
\usepackage{amsthm, latexsym, amssymb, amsfonts, epsf, xcolor, mathrsfs,graphicx}
\usepackage{algorithm}
\usepackage{algpseudocode}

\newcommand{\F}{\mathbb{F}}

\newcommand{\wt}{\textnormal{wt}}
\newcommand{\row}{\text{Row}}

\DeclareMathOperator{\hull}{Hull}
\DeclareMathOperator{\supp}{Supp}

\setcounter{MaxMatrixCols}{20}

\newtheorem{theorem}{Theorem}
\newtheorem{proposition}[theorem]{Proposition}
\newtheorem{lemma}[theorem]{Lemma}
\newtheorem{corollary}[theorem]{Corollary}
\newtheorem{definition}[theorem]{Definition}
\newtheorem{remark}[theorem]{Remark}
\newtheorem{example}[theorem]{Example}

\newcommand{\rmv}[1]{}



\interdisplaylinepenalty=2500 

\IEEEoverridecommandlockouts

\begin{document}

\title{Toward Quantum CSS-T Codes from Sparse Matrices}

\author{
\IEEEauthorblockN{Eduardo Camps-Moreno, \thanks{Camps-Moreno and Matthews are partially supported by NSF DMS-2201075. L\'{o}pez is partially supported by NSF DMS-2401558. McMillon is supported by NSF DMS-2303380.  The authors are partially supported by the Commonwealth Cyber Initiative.} Hiram H.\ L\'{o}pez, Gretchen L. Matthews, \& Emily McMillon}  \IEEEauthorblockA{Department of Mathematics\\
                    Virginia Tech\\
                    Blacksburg, VA, USA\\
                    Email: \{e.camps, hhlopez, gmatthews, emcmillon\}@vt.edu}
}

\maketitle

\begin{abstract}
CSS-T codes were recently introduced as quantum error-correcting codes that respect a transversal gate. A CSS-T code depends on a pair $(C_1, C_2)$ of binary linear codes $C_1$ and $C_2$ that satisfy certain conditions. We prove that $C_1$ and $C_2$ form a CSS-T pair if and only if $C_2 \subset \hull(C_1) \cap \hull(C_1^2)$, where the hull of a code is the intersection of the code with its dual. We show that if $(C_1,C_2)$ is a CSS-T pair, and the code $C_2$ is degenerated on $\{i\}$, meaning that the 
$i^{th}$-entry is zero for all the elements in $C_2$, then the pair of punctured codes $(C_1|_i,C_2|_i)$ is also a CSS-T pair. Finally, we provide Magma code based on our results and quasi-cyclic codes as a step toward finding quantum LDPC or LDGM CSS-T codes computationally.
\end{abstract}



\section{Introduction}
\label{section:intro}

Since the 1990s, it has been known that linear codes $C_1$ and $C_2$ with $C_2 \subseteq C_1$ may be used to define a  quantum stabilizer code $Q(C_1,C_2)$ called a CSS code, based on the work of Calderbank and Shor \cite{CalderbankShor_96} and Steane \cite{S96}. Stabilizer codes are important for quantum error correction, protecting quantum information against errors induced by noise and decoherence, and fault-tolerant quantum computing. It is desirable that quantum error-correcting codes enable gates to be implemented transversally, since transversal gates act independently on the qubits to impede the propagation of errors. Codes with this property are called CSS-T codes, as introduced in \cite{calderbankclassicalcsst}. More formally, 
the code $Q(C_1,C_2)$ is a CSS-T code provided $C_2$ is an even weight code, meaning all of its codewords have even weight, and for each codeword $c \in C_2$, the shortening of $C_1^{\perp}$ with respect to the support of $c$ is self-dual. Reed-Muller codes have been employed to define CSS-T codes \cite{felicecsst}, and the mathematical concepts defining binary CSS-T codes have been considered for larger alphabets \cite{albertocsst}. In \cite{campsmoreno2023poset}, it was demonstrated that CSS-T codes form a poset, providing tools to study maximal CSS-T codes. Even so, CSS-T codes (especially those satisfying other desirable properties) are elusive and determining asymptotically good families of CSS-T codes is an open question. 

Another important family of quantum error-correcting codes is formed by the quantum low-density parity-check (LDPC) codes \cite{Breuckmann}, \cite{G} which are CSS codes based on sparse parity-check matrices. In recent work, such codes have been demonstrated for constant-overhead fault-tolerant quantum computation  \cite{xu2023constantoverhead} and to
rival surface codes  \cite{IBM}, which were introduced in \cite{bravyi1998quantum}.
Additional evidence of the capabilities of quantum LDPC codes may be found in  \cite{TZ_14},
\cite{Panteleev_22}, \cite{LZ_22}, \cite{EKZ_22}, \cite{krishna2023vidermans}.

In this paper, we consider  CSS-T codes defined by sparse matrices. We demonstrate a step towards finding a source of such codes by providing an equivalent characterization of CSS-T codes and considering a construction of LDPC codes that have efficient encoding \cite{LGC_19}. This paper is organized as follows. Section \ref{section:prelim} reviews the necessary background. Section \ref{section:hulls} includes a useful characterization of binary CSS-T codes in terms of hulls and relative hulls of codes. In Section \ref{section:families}, we consider the application to sparse matrix codes. Examples are provided in Section \ref{section:examples}. A conclusion is given in Section \ref{section:conclusion}.

\section{Preliminaries}
\label{section:prelim}

In this section, we review the foundation on which our results will be built and set the notation to be used later. We use the standard notation from coding theory. All codes considered in this paper are binary linear codes. The finite field with two elements is $\F_2:=\{ 0, 1 \}$, and $\mathcal A^{m \times n}$ is the set of all $m \times n$ matrices whose entries are elements of a set $\mathcal A$. Given a matrix $A \in \F_2^{m \times n}$, its transpose is $A^T \in \F_2^{n \times m}$ and its $i^{\text{th}}$ row is $\row_iA$. The standard basis vectors for $\F_2^n$ are $e_i=(0, \dots, 0, 1, 0, \dots, 0)$, $i \in [n]:=\left\{ 1, \dots, n \right\}$, where the only nonzero entry is in position $i$, and $\mathbf{1}=\sum_{i \in [n]} e_i \in \F_2^n$ denotes the all ones vector. An $[n,k,d]$ code $C$ is a $k$-dimensional vector subspace of $\F_2^n$ in which any two distinct codewords (meaning elements of $C$) differ in at least $d$ positions.

The Schur (also known as the star or pointwise) product of two vectors $x=(x_1,\ldots,x_n), y=(y_1,\ldots,y_n) \in F_2^n$ is given by
$$x \star y := (x_1y_1,\ldots, x_ny_n).$$ The Schur (also called the star or pointwise) product of binary codes $C$ and $C'$ of length $n$ is denoted and defined by
$$
C \star C':=\left< c \star c': c \in C, c' \in C'\right> \subseteq \F_2^n,
$$
the $\F_2$-span of the Schur products of all codewords of $C$.
We define the square of the code $C$ by $C^2:=C \star C$.

\subsection{Codes defined by sparse matrices}
A binary linear code is a low-density parity-check (LDPC) code if it is the null space of a matrix $H \in \F_2^{r \times n}$ with row weight $O(1)$. LDPC codes have superior performance when coupled with iterative message passing algorithms that operate on an associated sparse Tanner graph.
A binary linear code is a low-density generator matrix (LDGM) code if it has a generator matrix $G \in \F_2^{k \times n}$ with row weight $O(1)$. LDGM codes facilitate efficient encoding. One may also consider moderate-density parity-check (MDPC) or moderate-density generator matrix (MDGM) codes, which have defining matrices with row weight $O(\sqrt{n \log n})$ or $O(\sqrt{n})$. For LDPC codes, row weights of less than 10 are typically considered \cite{Misoczki_MDPC}.

\subsection{CSS codes}
We say that $Q$ is $[[n,k,d]]$ to mean a quantum code that encodes $k$ logical qudits into $n$ physical qudits and minimum distance $d$. Let $C_1 = [n, k_1, d_1]$ and $C_2 = [n, k_2, d_2]$ be two binary codes such that $C_2 \subseteq C_1$. We denote by $Q(C_1,C_2)$ the quantum CSS code defined by $C_1$ and $C_2$. Recall that a binary generator matrix for the stabilizer that defines $Q(C_1,C_2)$ can be written as
\begin{equation} \label{eq:css}
\left[ 
\begin{array}{cc}
0 & H_1 \\ 
G_2 & 0
\end{array}
\right],
\end{equation}
where $H_1$ is a parity-check matrix for $C_1$ and $G_2$ is a generator matrix for $C_2$. The CSS code $Q(C_1,C_2)$ is a
$$[[n,k_1+k_2-n,\ge \min\{d_1,d_2\}]]$$ quantum code. Notice that the matrix in Expression (\ref{eq:css}) will be sparse if $H_1$ is the parity-check matrix of an LDPC code and $G_2$ is the generator matrix of an LDGM code. 

\subsection{Quasi-cyclic codes}
We will use quasi-cyclic codes to define the quantum codes of interest.
 Recall that a code $C \subseteq \F_2^n$ is quasi-cyclic if and only if there exists $l \in [n]$ such that $c=(c_1, \dots, c_n) \in C$ implies $(c_{n-l+1}, \dots, c_n, c_1, \dots, c_{n-l}) \in C$, where indices are taken modulo $n$. Note that the code $C$ is cyclic if it is quasi-cyclic for $l=1$. Quasi-cyclic codes can be defined by circulant matrices, meaning those of the form
$$
\left[ 
\begin{array}{cccc}
a_1 & a_2 & \dots & a_n \\
a_n & a_1 & \dots & a_{n-1}\\
\vdots & \vdots & & \vdots \\
a_2 & a_3 & \dots & a_1
\end{array}
\right],
$$
 used as blocks $A_{ij}$ in a generator or parity-check matrix of the form
 $$
\left[ 
\begin{array}{cccc}
A_{11} & A_{12} & \dots & A_{1n} \\
A_{21} & A_{22} & \dots & A_{2n}\\
\vdots & \vdots & & \vdots \\
A_{t1} & A_{t2} & \dots & A_{tn}
\end{array}
\right].
$$ 
Quasi-cyclic codes can be compactly described, and, as the next result shows, retain their structure under the Schur product.

\begin{proposition}
    If $C$ is quasi-cyclic, then $C^2$ is quasi-cyclic.
\end{proposition}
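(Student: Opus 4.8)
The plan is to show that the very cyclic-shift operator witnessing the quasi-cyclicity of $C$ also witnesses that of $C^2$, the point being that this shift commutes with the Schur product. First I would fix a value $l \in [n]$ for which $C$ is quasi-cyclic and introduce the $\F_2$-linear map $\sigma \colon \F_2^n \to \F_2^n$ defined by $\sigma(c_1, \dots, c_n) = (c_{n-l+1}, \dots, c_n, c_1, \dots, c_{n-l})$, so that the hypothesis becomes $\sigma(C) \subseteq C$. (Since $\sigma$ is a coordinate permutation it is invertible, so in fact $\sigma(C) = C$, but the containment is all we need.)

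The key step is the identity $\sigma(x \star y) = \sigma(x) \star \sigma(y)$ for all $x, y \in \F_2^n$: permuting coordinates and multiplying coordinatewise commute, which is a one-line check on entries. Applying this with $x = c$ and $y = c'$ for arbitrary $c, c' \in C$, and using $\sigma(c), \sigma(c') \in C$, gives $\sigma(c \star c') = \sigma(c) \star \sigma(c') \in C^2$. Hence $\sigma$ sends every one of the spanning vectors $c \star c'$ of $C^2$ into $C^2$.

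Finally, because $\sigma$ is $\F_2$-linear and $C^2$ is by definition the $\F_2$-span of the vectors $\{c \star c' : c, c' \in C\}$, linearity propagates the containment from the generators to the whole span, so $\sigma(C^2) \subseteq C^2$. This is exactly the statement that $C^2$ is quasi-cyclic, and indeed for the same parameter $l$.

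I do not expect a genuine obstacle here: the only subtlety is recognizing that quasi-cyclicity of $C^2$ requires $\sigma$ to preserve all of $C^2$, not merely a generating set, and this is immediate from the linearity of $\sigma$. The heart of the argument is just the equivariance $\sigma(x \star y) = \sigma(x) \star \sigma(y)$, which more conceptually says that the Schur product is a morphism of $\F_2[x]/(x^n-1)$-modules (or of the relevant cyclic-group representation), so it respects the cyclic structure automatically.
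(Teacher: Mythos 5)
Your proof is correct and follows essentially the same route as the paper's: both show that the cyclic shift witnessing quasi-cyclicity of $C$ commutes with the Schur product, hence maps the generators $c \star c'$ of $C^2$ into $C^2$, and extend to all of $C^2$ by linearity. Your write-up is, if anything, slightly more explicit about the final linearity step.
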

    
\begin{proof}
Suppose $C \in \F_2^n$ is quasi-cyclic and $w \in C^2$. Then $w=\sum_{c,c' \in C} a_{c,c'} c \star c'=\sum_{c,c' \in C} a_{c,c'} (c_1 c_1', \dots, c_n c_n') $ where $a_{c,c'} \in \F_2$. Notice that  $c, c' \in C$ implies 
$(c_{n-l+1}c_{n-l+1}', \dots, c_nc_n', c_1c_1', \dots, c_{n-l}c_{n-l}') \in C$
since $(c_{n-l+1}, \dots, c_n, c_1, \dots, c_{n-l}) \in C$ and $(c_{n-l+1}', \dots, c_n', c_1', \dots, c_{n-l}')  \in C$ which completes the proof.
\end{proof}

\section{CSS-T Codes and Hulls} \label{section:hulls}

Hulls were first considered by Assmus and Key \cite{assmus} and have been considered in the context of linearly complementary dual (LCD) codes \cite{MASSEY1992337} as well as entanglement-assisted quantum error-correcting codes \cite{LGC_19},  \cite{anderson2023relative}.  In this section, we consider a characterization of CSS-T codes and link it to the hull of a code. First, we recall the definition of a CSS-T code as given in \cite{calderbankclassicalcsst}. We say that a code $C$ is of even weight if and only if all of its codewords are of even weight.

\begin{definition} \label{def:CSST}
     Given binary codes $C_1$ and $C_2$ of length $n$ with $C_2 \subseteq C_1$, $Q(C_1, C_2)$ is a CSS-T code if and only if $C_2$ is of even weight and for all codewords $c\in C_2$, the shortening of $C_1^{\perp}$ with respect to the coordinates of $[n] \setminus supp(c)$ contains a self-dual code.
\end{definition}

We use the following result from \cite{campsmoreno2023poset} to determine quantum  codes which are CSS-T. 

\begin{theorem}\label{T:equiv-def} Let $C_1$ and $C_2$ be binary codes of length $n$.
    The following are equivalent.
    \begin{itemize}
        \item[\rm (1)] $Q(C_1, C_2)$ is a CSS-T code.
        \item[\rm (2)] $C_2\subseteq C_1\cap(C_1^{ 2})^\perp$.
        \item[\rm (3)] $C_1^\perp+C_1^{ 2}\subseteq C_2^\perp$.
    \end{itemize}
Moreover, if $Q(C_1, C_2)$ is a CSS-T code, then $C_2$ is self-orthogonal.
\end{theorem}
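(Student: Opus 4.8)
The plan is to establish $(2)\Leftrightarrow(3)$ by a one-line duality argument, to prove $(1)\Leftrightarrow(2)$ by rewriting the shortening condition of Definition~\ref{def:CSST} as an orthogonality condition, and then to read off the \emph{moreover} statement from $(2)$. Throughout, $(1)$ is understood to include the standing hypothesis $C_2\subseteq C_1$ of Definition~\ref{def:CSST}. For $(2)\Leftrightarrow(3)$: the containment in $(2)$ is equivalent to the pair of containments $C_2\subseteq C_1$ and $C_2\subseteq(C_1^2)^\perp$; taking duals turns these into $C_1^\perp\subseteq C_2^\perp$ and $C_1^2\subseteq C_2^\perp$, which together are exactly $(3)$. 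No difficulty arises here.

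For $(1)\Leftrightarrow(2)$ I will use two facts. First, every $v\in\F_2^n$ satisfies $v\star v=v$, so $C_1\subseteq C_1^2$; hence any $c\in C_2\subseteq C_1$ lies in $C_1^2$, and if in addition $c\in(C_1^2)^\perp$ then $\langle c,c\rangle=0$, i.e. $\wt(c)$ is even. Thus $(2)$ already forces $C_2$ to have even weight, and the ``even weight'' clause of Definition~\ref{def:CSST} will be redundant. Second, a classical fact over $\F_2$: a code $E\subseteq\F_2^m$ contains a self-dual code if and only if $m$ is even and $E^\perp\subseteq E$. The ``only if'' is immediate, since a self-dual $D\subseteq E$ satisfies $E^\perp\subseteq D^\perp=D\subseteq E$ and has dimension $m/2$. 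For the ``if'', one enlarges the self-orthogonal code $E^\perp$ step by step, keeping it self-orthogonal and inside $E$: if the current code $F$ has $\dim F<m/2$ then $F\subsetneq F^\perp\subseteq E$, and one may adjoin to $F$ either any even-weight vector of $F^\perp$ not in $F$, or, should $\mathbf 1\notin F$, the vector $\mathbf 1$ itself (whose weight $m$ is even); iterating reaches a self-orthogonal code of dimension $m/2$ inside $E$, which is necessarily self-dual.

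Now fix $c\in C_2$, put $S=\supp(c)$ and $T=[n]\setminus S$, and let $E_c=\{x|_S:x\in C_1^\perp,\ x|_T=0\}\subseteq\F_2^S$ be the shortening of $C_1^\perp$ appearing in Definition~\ref{def:CSST}; it has length $\wt(c)$. By the standard shortening--puncturing duality, $E_c^\perp$ is the puncturing of $(C_1^\perp)^\perp=C_1$ at the coordinates in $T$, that is, $C_1$ restricted to the coordinates $S$. Therefore $E_c^\perp\subseteq E_c$ says that every $u|_S$ with $u\in C_1$ lies in $E_c$, i.e. that for every $u\in C_1$ the vector agreeing with $u$ on $S$ and vanishing on $T$ --- namely $u\star c$, since $c$ is the indicator vector of $S$ --- belongs to $C_1^\perp$; equivalently, $c\star C_1\subseteq C_1^\perp$. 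Expanding, $c\star C_1\subseteq C_1^\perp$ iff $\langle c\star u,u'\rangle=0$ for all $u,u'\in C_1$, iff $\langle c,u\star u'\rangle=0$ for all $u,u'\in C_1$, iff $c\in(C_1^2)^\perp$. Combining with the second fact above, the shortening associated to $c$ contains a self-dual code iff $\wt(c)$ is even and $c\in(C_1^2)^\perp$. Substituting this into Definition~\ref{def:CSST} and discarding the now-redundant parity clause, $Q(C_1,C_2)$ is CSS-T iff $C_2\subseteq C_1$ and $C_2\subseteq(C_1^2)^\perp$, which is $(2)$.

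For the \emph{moreover} statement, assume $(2)$. Given $c,c'\in C_2$, both lie in $C_1$, so $c\star c'\in C_1^2$; since $c\in(C_1^2)^\perp$ we get $\langle c,c\star c'\rangle=0$, and over $\F_2$ one has $\langle c,c\star c'\rangle=\sum_i c_i(c_ic_i')=\sum_i c_ic_i'=\langle c,c'\rangle$. Hence $\langle c,c'\rangle=0$ for all $c,c'\in C_2$, i.e. $C_2$ is self-orthogonal. The main obstacle in this plan is the second fact of the previous paragraph --- isolating the correct criterion for a binary code to contain a self-dual subcode and justifying the enlargement step --- together with invoking the shortening--puncturing duality with the coordinate conventions lined up so that the lift of $u$ supported on $S$ is exactly $u\star c$; the remaining steps are routine manipulations of inner products.
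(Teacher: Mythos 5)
The paper does not actually prove Theorem~\ref{T:equiv-def}; it imports it from \cite{campsmoreno2023poset}, so there is no in-paper proof to match yours against. Judged on its own, your argument is correct and gives a genuine self-contained derivation from Definition~\ref{def:CSST}: the $(2)\Leftrightarrow(3)$ duality step, the identity $\langle c\star u,u'\rangle=\langle c,u\star u'\rangle$ converting ``$c\star C_1\subseteq C_1^\perp$'' into ``$c\in(C_1^2)^\perp$'', the shortening--puncturing duality identifying $E_c^\perp$ with $C_1$ restricted to $\supp(c)$, and the observation that $(2)$ makes the even-weight clause redundant are all sound, as is the \emph{moreover} computation. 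The one place you should tighten is the greedy extension in your ``second fact'': you assert one may adjoin either an even-weight vector of $F^\perp\setminus F$ or $\mathbf 1$, but you never show that at least one such vector exists whenever $\dim F<m/2$. It does: $F$ self-orthogonal implies $F$ is even-weight, so $\mathbf 1\in F^\perp$, and if $\mathbf 1\in F$ then the even-weight subcode $F^\perp\cap\langle\mathbf 1\rangle^\perp$ has dimension at least $m-\dim F-1\ge m/2>\dim F$, so it meets $F^\perp\setminus F$. Adding that two-line count (and, if you want to be fastidious, a remark on the degenerate case $c=0$, where the shortening has length $0$) makes the proof complete.
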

We use the following result from \cite{campsmoreno2023poset} to determine the parameters of a quantum  CSS-T code. 
\begin{corollary}\label{C:dmin2perp}
Let $(C_1, C_2)$ be a CSS-T pair. Then 
$$
\min \{\wt(C_1),\wt(C_2^\perp)\}=\wt(C_2^\perp),
$$
and the parameters of the corresponding CSS-T code are $$[[n,k_1-k_2,\ge \wt(C_2^\perp)]].$$
Moreover, if the code is nondegenerate, we have equality in the minimum distance.
\end{corollary}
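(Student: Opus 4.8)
The plan is to read off everything from the algebraic characterization in Theorem~\ref{T:equiv-def} together with the standard description of the logical operators of the CSS code $Q(C_1,C_2)$. I would begin with the weight identity. Since the codes are binary, $c\star c=c$ for every vector $c$, so $C_1\subseteq C_1^{2}$; combined with the inclusion $C_1^{2}\subseteq C_2^{\perp}$ coming from Theorem~\ref{T:equiv-def}(3), this gives $C_1\subseteq C_2^{\perp}$. A subcode has minimum distance at least that of any code containing it, so $\wt(C_2^{\perp})\le \wt(C_1)$, which is precisely $\min\{\wt(C_1),\wt(C_2^{\perp})\}=\wt(C_2^{\perp})$.

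Next the parameters. The dimension is immediate from~(\ref{eq:css}): the blocks $[\,0\mid H_1\,]$ and $[\,G_2\mid 0\,]$ commute exactly because $C_2\subseteq C_1$, and they occupy complementary coordinates while being individually full rank, so the stabilizer has $(n-k_1)+k_2$ independent generators, leaving $k_1-k_2$ logical qubits. For the minimum distance I would invoke the usual fact that a minimum-weight nontrivial logical operator of a CSS code may be taken purely of $X$- or of $Z$-type, the $X$-type representatives ranging over $C_1\setminus C_2$ and the $Z$-type representatives over $C_2^{\perp}\setminus C_1^{\perp}$ (here $C_1^{\perp}\subseteq C_2^{\perp}$ since $C_2\subseteq C_1$). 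Because $C_1\setminus C_2\subseteq C_1\setminus\{0\}$ and $C_2^{\perp}\setminus C_1^{\perp}\subseteq C_2^{\perp}\setminus\{0\}$, the distance $d$ satisfies $d\ge\min\{\wt(C_1),\wt(C_2^{\perp})\}$, and this equals $\wt(C_2^{\perp})$ by the first step; this establishes the $[[\,n,\,k_1-k_2,\,\ge\wt(C_2^{\perp})\,]]$ claim.

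Finally I would prove the equality under nondegeneracy by contraposition. If $d>\wt(C_2^{\perp})$, then no codeword of $C_2^{\perp}\setminus C_1^{\perp}$ has weight $\wt(C_2^{\perp})$, so every minimum-weight codeword $v$ of $C_2^{\perp}$ lies in $C_1^{\perp}$; such a $v$ yields a nontrivial $Z$-type stabilizer generator of weight $\wt(v)=\wt(C_2^{\perp})<d$, so the code is degenerate. Hence a nondegenerate CSS-T code has $d=\wt(C_2^{\perp})$. The one point demanding care is to fix conventions precisely: that ``nondegenerate'' means no nontrivial stabilizer generator has weight below $d$, and that the CSS minimum distance is genuinely the smaller of the two logical-operator weights rather than being attained only by some mixed $X$/$Z$ operator. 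Granting these standard facts, the corollary is a direct consequence of Theorem~\ref{T:equiv-def}.
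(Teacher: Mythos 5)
This corollary is stated in the paper without proof---it is imported from \cite{campsmoreno2023poset}---so there is no in-paper argument to compare yours against. Your derivation is correct, and it is the natural one: the only non-generic ingredient is the chain $C_1\subseteq C_1^{2}\subseteq C_2^{\perp}$ (from $c\star c=c$ over $\F_2$ together with Theorem~\ref{T:equiv-def}(3)), which forces $\wt(C_2^{\perp})\le\wt(C_1)$ and hence collapses the minimum; the dimension count $k_1-k_2$, the bound $d\ge\min\{\wt(C_1),\wt(C_2^{\perp})\}$ via purely $X$- or $Z$-type logical representatives from $C_1\setminus C_2$ and $C_2^{\perp}\setminus C_1^{\perp}$, and the contrapositive argument for the nondegenerate case are all standard CSS facts, correctly applied. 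The only nitpick is terminological: the low-weight $v\in C_1^{\perp}$ you exhibit in the last step is a stabilizer \emph{element} rather than a generator, which is what the usual definition of degeneracy refers to anyway, so the conclusion stands.
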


\begin{definition}
A pair of codes $(C_1,C_2)$ satisfying the conditions in Theorem \ref{T:equiv-def} is called a $CSS$-T pair. 
\end{definition}

According to Theorem \ref{T:equiv-def}, given a code $C$, the intersection $C \cap (C^2)^{\perp}$
 is useful in determining a maximal 
CSS-T pair $(C_1,C_2)$ with $C_1=C$. To better understand this intersection, we note its relationship to the hull of a code.

\begin{definition} \label{def:hull}
    The hull of a code $C$ is $\hull (C):=C \cap C^{\perp}$. Given codes $C_1$ and $C_2$ of the same length, the relative hull of $C_1$ with respect to $C_2$ is $\hull_{C_2}(C_1):=C_1 \cap C_2^{\perp}$. 
\end{definition}

\begin{lemma} \label{lemma:hull_intersections}
  For a binary code $C$, 
 $$C \cap (C^2)^{\perp} = \hull (C) \cap \hull (C^2).
$$    
\end{lemma}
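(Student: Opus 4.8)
The plan is to prove the two inclusions separately, using the elementary fact that for any code $C$ one has $C \subseteq C^2$. This containment holds because for each $i \in [n]$ the standard basis vector $e_i$ of a code contributes $e_i \star e_i = e_i$, and more simply because any codeword $c$ satisfies $c \star c = c$ over $\F_2$ (every entry is $0$ or $1$), so $c \in C^2$. Taking duals reverses the inclusion: $(C^2)^{\perp} \subseteq C^{\perp}$.

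For the inclusion $\hull(C) \cap \hull(C^2) \subseteq C \cap (C^2)^{\perp}$, suppose $x \in \hull(C) \cap \hull(C^2)$. Then $x \in \hull(C) = C \cap C^{\perp}$, so in particular $x \in C$; and $x \in \hull(C^2) = C^2 \cap (C^2)^{\perp}$, so in particular $x \in (C^2)^{\perp}$. Hence $x \in C \cap (C^2)^{\perp}$, as desired. This direction uses only that the hull is contained in the code and in its dual, and requires nothing about the relationship between $C$ and $C^2$.

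For the reverse inclusion $C \cap (C^2)^{\perp} \subseteq \hull(C) \cap \hull(C^2)$, suppose $x \in C \cap (C^2)^{\perp}$. I must show $x$ lies in both $C \cap C^{\perp}$ and $C^2 \cap (C^2)^{\perp}$. Since $x \in (C^2)^{\perp} \subseteq C^{\perp}$ (using $C \subseteq C^2$) and $x \in C$, we get $x \in \hull(C)$. For the other part, $x \in C \subseteq C^2$, and $x \in (C^2)^{\perp}$ by hypothesis, so $x \in \hull(C^2)$. Combining, $x \in \hull(C) \cap \hull(C^2)$.

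There is essentially no obstacle here: the only ingredient beyond unwinding definitions is the observation $C \subseteq C^2$, which is immediate over $\F_2$ since idempotency of the Schur product on $\{0,1\}$ gives $c = c \star c \in C^2$ for every $c \in C$. I would state this containment explicitly (perhaps as a one-line remark before the proof, since it is used again implicitly elsewhere in the paper) and then present the two inclusions as above. The whole argument is three or four short lines once $C \subseteq C^2$ is in hand.
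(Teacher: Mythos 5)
Your proof is correct and follows essentially the same route as the paper's: both directions rest on the single observation that $C \subseteq C^2$ (since $c \star c = c$ over $\F_2$), which gives $(C^2)^{\perp} \subseteq C^{\perp}$, after which each inclusion is a matter of unwinding the definition of the hull. No substantive difference from the paper's argument.
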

 
\begin{proof}
    Note that $C \subseteq C^2$, since $w \star w=w$ for all $w \in \F_2^n$. Note $C \cap (C^2)^{\perp} \subseteq C^2 \cap (C^2)^{\perp} = \hull(C^2)$. In addition,      $(C^2)^{\perp} \subseteq C^{\perp}$ and $C \cap (C^2)^{\perp} \subseteq \hull (C)$. Thus, $C \cap (C^2)^{\perp} \subseteq \hull (C) \cap \hull (C^2).$

    As $\hull (C) \cap \hull (C^2) = C \cap C^\perp \cap C^2 \cap (C^2)^\perp \subseteq C \cap (C^2)^{\perp}$, the proof is complete.
\end{proof}

We come to one of the main results of this section. Lemma \ref{lemma:hull_intersections} provides another characterization of CSS-T codes, as stated in the next result. 

\begin{theorem}\label{CSSTHull}
Given binary codes $C_1$ and $C_2$ with $C_2 \subseteq C_1$, 
$Q(C_1,C_2)$ is a CSS-T code if and only if
$$C_2 \subseteq \hull (C_1) \cap \hull (C_1^2)$$ if and only if 
    $$
    C_2 \subseteq \hull_{C_1^2}(C_1).
    $$
\end{theorem}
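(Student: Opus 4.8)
The plan is to obtain the statement directly by chaining the two results already proved in this section, with no new computation required. First I would invoke Theorem~\ref{T:equiv-def}: $Q(C_1,C_2)$ is a CSS-T code if and only if $C_2 \subseteq C_1 \cap (C_1^2)^\perp$. Thus the whole theorem reduces to recognizing that the set $C_1 \cap (C_1^2)^\perp$ is the same as each of the two sets appearing in the statement.

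Second, I would apply Lemma~\ref{lemma:hull_intersections} with $C = C_1$, which gives $C_1 \cap (C_1^2)^\perp = \hull(C_1) \cap \hull(C_1^2)$. Substituting this equality into the condition furnished by Theorem~\ref{T:equiv-def} yields the first claimed equivalence, namely that $Q(C_1,C_2)$ is a CSS-T code if and only if $C_2 \subseteq \hull(C_1) \cap \hull(C_1^2)$. For the remaining equivalence I would merely unpack Definition~\ref{def:hull}: the relative hull satisfies $\hull_{C_1^2}(C_1) = C_1 \cap (C_1^2)^\perp$ by definition, which is literally the same subspace as $C_1 \cap (C_1^2)^\perp = \hull(C_1) \cap \hull(C_1^2)$. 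Hence the three containments $C_2 \subseteq C_1 \cap (C_1^2)^\perp$, $C_2 \subseteq \hull(C_1) \cap \hull(C_1^2)$, and $C_2 \subseteq \hull_{C_1^2}(C_1)$ are mutually equivalent, and the theorem follows.

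I do not anticipate a genuine obstacle here, since the argument is essentially a bookkeeping step over Theorem~\ref{T:equiv-def} and Lemma~\ref{lemma:hull_intersections}. The only point I would flag explicitly is that the standing hypothesis $C_2 \subseteq C_1$ is not an extra assumption but is forced by either hull containment: both $\hull(C_1) \cap \hull(C_1^2) \subseteq C_1$ and $\hull_{C_1^2}(C_1) \subseteq C_1$, so the characterization is genuinely a statement about a single subspace containment. This remark also makes clear why the reformulation in terms of $\hull_{C_1^2}(C_1)$ is the most economical one: it packages the CSS-T property of the pair $(C_1,C_2)$ into the single requirement that $C_2$ lie in the relative hull of $C_1$ with respect to $C_1^2$.
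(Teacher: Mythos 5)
Your proposal is correct and follows exactly the paper's own route: the authors likewise derive the theorem immediately from Theorem~\ref{T:equiv-def}, Lemma~\ref{lemma:hull_intersections}, and Definition~\ref{def:hull}. Your additional remark that $C_2 \subseteq C_1$ is forced by either hull containment is a nice observation but does not change the argument.
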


\begin{proof}
    This result follows immediately from Theorem \ref{T:equiv-def}, Lemma \ref{lemma:hull_intersections}, and Definition \ref{def:hull}. 
\end{proof}

Notice that Theorem \ref{CSSTHull} together with Theorem \ref{T:equiv-def} indicates that to determine a CSS-T pair $(C_1,C_2)$, one may restrict the search to self-orthogonal codes $C_2$ in the relative hull of $C_1$ with respect to its square.

\section{Puncturing}
Let $C\subset \F_2^n$ be a code and $i\in [n]$. \rmv{The shortening of $C$ in $\{i\}$, denoted by $C_{\{i\}}$, is the binary code
\begin{align*}
C_{\{i\}} = \{ & (c_1,\dots,c_{i-1},c_{i+1},\dots,c_n) \\
: &(c_1,\dots,c_{i-1},0,c_{i+1},\dots,n)\in C\}. 
\end{align*}
}
The puncturing of $C$ in $\{i\}$, denoted by $C|_i$, is the binary code
\begin{align*}
C|_i := \{ &(c_1,\dots,c_{i-1},c_{i+1},\dots,c_n) \\
: &(c_1,\dots,c_{i-1},c_i,c_{i+1},\dots,c_n)
\in C,\\
&\text{for some } c_i \in \F_2 \}. 
\end{align*}
For $S\subset[n]$, we write $C|_S$ for the successive puncturing of $C$ in the coordinates indexed by the elements in $S$. The code $C$ is degenerated on $\{i\}$ if $c_i=0$ for every $c=(c_1,\ldots,c_i,\ldots,c_n) \in C$. 

We come to one of the main results of this section. The following theorem states that if $Q(C_1,C_2)$ is a CSS-T code, then $Q(C_1|_i,C_2|_i)$ is a CSS-T code whenever the code $\hull (C_1) \cap \hull (C_1^2)$ is degenerated on $\{i\}$.

\begin{theorem}\label{03.03.24}
    Let $C_2 \subseteq C_1$ be binary codes.
    Assume that $C_2$ is degenerated on $\{i\}$. If $Q(C_1, C_2)$ is a CSS-T code, then $Q(C_1|_i, C_2|_i)$ is a CSS-T code.
    \rmv{Let $D\subseteq C\cap(C^{\ast 2})^\perp$. If $i\notin\mathrm{supp}(C\cap(C^{\ast 2})^\perp)$, then $D|_i\subseteq C|_i\cap(C|_i^{\ast 2})^\perp$.}
\end{theorem}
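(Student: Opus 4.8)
The strategy is to pass everything through the hull characterization in Theorem~\ref{CSSTHull}. By that theorem it suffices to show that $C_2|_i \subseteq \hull(C_1|_i) \cap \hull((C_1|_i)^2)$, knowing that $C_2 \subseteq \hull(C_1) \cap \hull(C_1^2)$ and that $C_2$ is degenerated on $\{i\}$. The key structural observation is that, because $C_2$ is degenerated on $\{i\}$, puncturing acts on $C_2$ simply by deleting a zero column: $C_2|_i$ is isomorphic to $C_2$ as an abstract code, inner products between its elements are unchanged, and—crucially—for $c, c' \in C_2$ we have $(c \star c')|_i = c|_i \star c'|_i$ with no loss of information, so $(C_2|_i)^2 = (C_2^2)|_i$ as well.

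First I would set up the two containments that always hold under puncturing, regardless of degeneracy: for codes $D \subseteq \F_2^n$ one has $(D|_i)^\perp \supseteq (D^\perp)|_i$ is \emph{not} generally true, so I would instead work with shortening-duality, namely that the dual of a puncturing is the shortening of the dual, $(D|_i)^\perp = (D^\perp)_{\{i\}}$. Then I would argue: since $C_2 \subseteq C_1 \cap C_1^\perp$ and every codeword of $C_2$ is zero in coordinate $i$, each $c \in C_2$ lies in the shortening $(C_1^\perp)_{\{i\}}$, whose deletion of coordinate $i$ is exactly $c|_i$; hence $C_2|_i \subseteq (C_1^\perp)_{\{i\}}{}' = (C_1|_i)^\perp$. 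Combined with the trivial containment $C_2|_i \subseteq C_1|_i$ (puncturing is monotone), this gives $C_2|_i \subseteq \hull(C_1|_i)$.

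Next I would handle the square. I would show $(C_1|_i)^2 = (C_1^2)|_i$: the inclusion $\subseteq$ is immediate since $(c \star c')|_i = c|_i \star c'|_i$, and $\supseteq$ holds because every generator $c \star c'$ of $C_1^2$ punctures to $c|_i \star c'|_i \in (C_1|_i)^2$. Then the same argument as in the previous paragraph, applied with $C_1$ replaced by $C_1^2$ and using that $C_2 \subseteq (C_1^2)^\perp$ together with the degeneracy of $C_2$ on $\{i\}$, yields $C_2|_i \subseteq ((C_1^2)^\perp)_{\{i\}}{}' = ((C_1^2)|_i)^\perp = ((C_1|_i)^2)^\perp$. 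Since also $C_2|_i \subseteq C_1^2|_i = (C_1|_i)^2$ (because $C_2 \subseteq C_1 \subseteq C_1^2$ and puncturing is monotone), we get $C_2|_i \subseteq \hull((C_1|_i)^2)$. Combining the two displayed conclusions and invoking Theorem~\ref{CSSTHull} in the reverse direction finishes the proof; one also checks $C_2|_i \subseteq C_1|_i$ so that $Q(C_1|_i, C_2|_i)$ is defined.

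The step I expect to be the main obstacle is the clean bookkeeping of \emph{dual-of-puncture equals shortening-of-dual} and making sure the degeneracy hypothesis is used exactly where it is needed—specifically, that a codeword of $C_2^\perp$-type containment survives puncturing only because the deleted coordinate is identically zero on $C_2$, so no information about membership is lost when we drop column $i$. Everything else—monotonicity of puncturing, and the identity $(C_1|_i)^2 = (C_1^2)|_i$—is routine once the Schur product is seen to commute with coordinate deletion. A cleaner alternative, which I would mention, is to use Theorem~\ref{T:equiv-def}(3) in the punctured setting: show $(C_1|_i)^\perp + (C_1|_i)^2 \subseteq (C_2|_i)^\perp$ directly by puncturing the inclusion $C_1^\perp + C_1^2 \subseteq C_2^\perp$, again leveraging that the $i$th coordinate is zero on all of $C_2$ so that orthogonality to $C_2$ is preserved under deletion of that coordinate.
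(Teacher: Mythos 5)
Your proposal is correct and follows essentially the same route as the paper: both reduce to the equivalent characterization (Theorem~\ref{CSSTHull} / Theorem~\ref{T:equiv-def}) and exploit that deleting a coordinate on which every codeword of $C_2$ vanishes preserves inner products against $C_2$, together with the fact that the Schur square commutes with puncturing, so $(C_1|_i)^2=(C_1^2)|_i$. The paper's proof is just a terser version of yours --- it verifies only $C_2|_i\subseteq C_1|_i\cap\bigl((C_1|_i)^2\bigr)^\perp$ directly via $c\cdot w=c|_i\cdot w|_i$, whereas you additionally check the full hull containments through the shortening--duality identity, which is equivalent by Lemma~\ref{lemma:hull_intersections}.
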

\begin{proof}
    Note that for any $c\in C_2 \subseteq C_1$, $c_i=0$, so $c|_i\in C_2|_i$ and $c|_i\in C_1|_i$.
    
    As $Q(C_1, C_2)$ is a CSS-T code, then $$C_2 \subseteq \hull (C_1) \cap \hull (C_1^2)$$ by Theorem~\ref{CSSTHull}. Thus,
    $c\cdot w=c|_i\cdot w|_i=0$
    for every element $w$ in $C_1^2$.
    We obtain $$C_2|_i\subseteq C_1|_i\cap(C_1|_i)^{2\perp},$$ from which we get the conclusion by Theorem~\ref{CSSTHull}.
\end{proof}

The support of $C\subset \F_2^n$ is denoted and defined by
\begin{align*}
\supp(C):=\{&i \in [n] : c_i \neq 0 \text{ for some }\\
&c=(c_1,\ldots,c_i,\ldots,c_n) \in C \}.
\end{align*}

\begin{corollary}
    Let $S$ be the complement of $\supp(C)$.
    If $Q(C_1,C_2)$ is a CSS-T code, then $(C_1|_S,C_2|_S)$ is a CSS-T code.
\end{corollary}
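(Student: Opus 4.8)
The plan is to reduce the corollary to a single application of Theorem~\ref{03.03.24} by observing that puncturing over the whole complement $S$ of $\supp(C_2)$ is the same as iterating the one-coordinate puncturing of Theorem~\ref{03.03.24}, and that at each stage the hypothesis ``$C_2$ is degenerated on $\{i\}$'' is preserved. (Note: I read the statement as $S=[n]\setminus\supp(C_2)$; if the intent were $S=[n]\setminus\supp(C_1)$ one would instead remark that $\supp(C_2)\subseteq\supp(C_1)$ since $C_2\subseteq C_1$, so $[n]\setminus\supp(C_1)\subseteq[n]\setminus\supp(C_2)$ and the argument below still applies to the smaller set.)

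First I would write $S=\{i_1,\dots,i_r\}$ and define the chain of codes $C_2^{(0)}:=C_2$, $C_2^{(j)}:=C_2^{(j-1)}|_{i_j}$, and likewise $C_1^{(j)}$, so that $C_2^{(r)}=C_2|_S$ and $C_1^{(r)}=C_1|_S$ by definition of successive puncturing. The key observation is that if $i\notin\supp(C_2)$ then $C_2$ is degenerated on $\{i\}$, and moreover puncturing $C_2$ at such a coordinate only deletes a column of zeros: no two codewords are identified, so $C_2|_i$ has the same set of nonzero coordinates (relabeled), and in particular $i'\notin\supp(C_2|_i)$ for every other coordinate $i'$ that was outside $\supp(C_2)$. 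Hence, after the obvious reindexing, $C_2^{(j)}$ is degenerated on the coordinate corresponding to $i_{j+1}$ for each $j<r$, and $C_2^{(j)}\subseteq C_1^{(j)}$ is preserved under puncturing.

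With this in place, I would argue by induction on $j$ that $Q(C_1^{(j)},C_2^{(j)})$ is a CSS-T code: the base case $j=0$ is the hypothesis, and the inductive step is exactly Theorem~\ref{03.03.24} applied to the pair $(C_1^{(j)},C_2^{(j)})$ at the coordinate $i_{j+1}$, whose hypotheses hold by the previous paragraph. Taking $j=r$ gives that $Q(C_1|_S,C_2|_S)$ is a CSS-T code, which is the claim.

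The only real thing to be careful about — and the step I would single out as the main obstacle — is the bookkeeping for reindexing after each puncture, together with the claim that puncturing a code at a zero coordinate preserves the ``degenerated'' property at the remaining out-of-support coordinates. This is intuitively obvious (deleting an all-zero column changes nothing except the ambient length), but it must be stated carefully because in general puncturing can merge codewords; here it does not, precisely because the deleted coordinate is identically zero on $C_2$. Everything else is a direct appeal to Theorem~\ref{03.03.24}.
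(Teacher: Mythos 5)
Your proposal is correct and matches the paper's approach: the paper's proof is simply the one-line remark that the corollary is a consequence of Theorem~\ref{03.03.24}, and your argument spells out the iteration of that theorem over the coordinates of $S$ (including the observation, left implicit in the paper, that deleting an all-zero column preserves degeneracy at the remaining out-of-support coordinates). Your reading of the ambiguous ``$\supp(C)$'' as $\supp(C_2)$ is the sensible one and is handled appropriately.
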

\begin{proof}
    This is a consequence of Theorem~\ref{03.03.24}.
\end{proof}

\section{Quasi-cyclic low-density codes}
\label{section:families}
In this section, we study quantum LDPC and LDGM codes defined by quasi-cyclic codes.
\begin{definition}
    A CSS code $Q(C_1, C_2)$ is a quantum LDPC code if $C_1$ and $C_2^\perp$ are LDPC codes, or, equivalently, if $C_1$ is an LDPC code and $C_2$ is an LDGM code. Similarly, $Q(C_1, C_2)$ is a quantum LDGM code if $C_1$ and $C_2^\perp$ are LDGM codes, or, equivalently, if $C_1$ is an LDGM code and $C_2$ is an LDPC code.
\end{definition}

\begin{remark}
    Note that a binary generator matrix for the stabilizer that defines a quantum LDPC code $Q(C_1,C_2)$ can be written as
$$
\left[ 
\begin{array}{cc}
0 & H_1 \\ 
G_2 & 0
\end{array}
\right],
$$
where $H_1$ and $G_2$ are sparse matrices.
\end{remark}

Here we use those ideas to consider LDPC codes which give rise to CSS-T pairs. As proof of concept, we make use of a code construction found in \cite{myung}, where the authors sought codes that have both efficient encoding algorithms and fast iterative decoding algorithms.


For an integer $L\geq 2$, define the matrix
$$P:=\begin{pmatrix} 0&1&0&\cdots&0\\ 0&0&1&\cdots&0\\ \vdots&\vdots&\vdots&\ddots&\vdots\\
0&0&0&\cdots&1\\
1&0&0&\cdots&0\end{pmatrix}\in\mathbb{F}_2^{L\times L}.$$

Take positive integers $m$ and $n$. Let $a_{ij}\in\mathbb{Z}_L\cup\{\infty\}$, for $0\leq i\leq m-1$ and $0\leq j\leq n-1$. Define a code $C$ by the parity check matrix
\begin{equation} \label{eq:H}
H=\begin{pmatrix}
P^{a_{00}}&P^{a_{01}}&\cdots&P^{a_{0(n-1)}}\\
\vdots&\vdots&\ddots&\vdots\\
P^{a_{(m-1)0}}&P^{a_{(m-1)1}}&\cdots&P^{a_{(m-1)(n-1)}}\end{pmatrix},\end{equation}
where $P^\infty$ denotes a square matrix of zeroes of size $L$ and $P^{a_{ij}}$ is the usual $a_{ij}$ power matrix multiplication of the matrix $P$.

Observe that if $a\in\mathbb{Z}_L$, then $P^{a}$ is a permutation matrix. Indeed, $\row_i\ P^{a}=e_{a+i\ \mathrm{mod}\ L}^T$.
Note that the code $C$ is a quasi-cyclic LDPC (QC-LDPC) code. The weight of each row is $\leq n$ and the weight of each column is $\leq m$. Moreover, by knowing $a$, we can recover immediately $P^a$. Thus, we can store $H$ with a smaller base matrix
$$M_H=\begin{pmatrix}
a_{00}&\cdots&a_{0(n-1)}\\
\vdots&\ddots&\vdots\\
a_{(m-1)0}&\cdots&a_{(m-1)(n-1)}\end{pmatrix}\in(\mathbb{Z}_L\cup\{\infty\})^{m\times n}.$$

We aim  to describe the square of $C$ in terms of the entries of $M_H$. To that end, for $a\in(\mathbb{Z}_L\cup\{\infty\})^n$, let $R(a)\in\mathbb{F}_2^{Ln}$ be defined as
$$[R(a)]_i=\begin{cases}
1& \text{if } i=jL+a_j\text{ for some }j\in\{0,\cdots,n-1\}\\& \text{such that } a_j\neq \infty\\
0&\text{otherwise}.\end{cases}$$
Making use of this definition, the following Lemma gives a natural connection between $M_H$ and the rows of $H$.

\begin{lemma}\label{03.03.24.L}
    Let $C$ be a quasi-cyclic LDPC code with shift $L$ and assume $a\in(\mathbb{Z}_L\cup\{\infty\})^n$ is such that $R(a)\in C$. Extend the sum over $\mathbb{Z}_L$ to $\mathbb{Z}_L\cup\{\infty\}$ by taking $x+\infty=\infty$ for any $x\in \mathbb{Z}_L\cup\{\infty\}$ and take $\mathbf{1}=(1,\ldots,1)\in\mathbb{Z}_L^n$. Then
    $$R(a+j\mathbf{1})\in C,\ \forall j\in\mathbb{Z}_L.$$
\end{lemma}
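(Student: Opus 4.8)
The plan is to unwind the definitions of $R(\cdot)$ and of the matrix $P$, and to show that the vector $R(a + j\mathbf{1})$ is obtained from $R(a)$ by a specific quasi-cyclic shift of the ambient space $\mathbb{F}_2^{Ln}$, so that membership in $C$ is preserved because $C$ is quasi-cyclic with shift $L$. First I would fix $a = (a_0, \ldots, a_{n-1}) \in (\mathbb{Z}_L \cup \{\infty\})^n$ and describe $R(a)$ blockwise: viewing $\mathbb{F}_2^{Ln}$ as $n$ consecutive blocks of length $L$, the $j$-th block of $R(a)$ is $e_{a_j}^T$ (indices in $\mathbb{Z}_L$) when $a_j \neq \infty$, and the zero block when $a_j = \infty$. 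From the observation recorded just before the lemma, $\row_i\, P^{a} = e_{a+i \bmod L}^T$; in particular the cyclic shift by one coordinate sends $e_{a_j}^T$ to $e_{a_j + 1 \bmod L}^T$ within each block, and it sends a zero block to a zero block. Since $x + \infty = \infty$, this is exactly the block description of $R(a + \mathbf{1})$.

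Next I would make precise which permutation of $[Ln]$ realizes $a \mapsto a + \mathbf{1}$. Incrementing every finite $a_j$ by $1$ modulo $L$ while leaving $\infty$-entries untouched corresponds to applying, independently on each length-$L$ block, the cyclic shift by one. On the full coordinate set $[Ln]$ this is the permutation $\sigma$ that sends position $jL + t$ to $jL + (t+1 \bmod L)$ for $t \in \{0,\ldots,L-1\}$ and each block index $j$; equivalently, $\sigma$ is the block-diagonal permutation $\mathrm{diag}(P, \ldots, P)$ acting on $\mathbb{F}_2^{Ln}$. The key point is that this per-block cyclic shift coincides on $\mathbb{F}_2^{Ln}$ with the quasi-cyclic shift with parameter $l = L$ from the definition of quasi-cyclic codes: shifting the whole vector cyclically by $L$ positions permutes the blocks among themselves but moves no coordinate out of its own block, and hence acts as $\mathrm{diag}(P,\ldots,P)$ up to the (irrelevant) simultaneous rotation of the $n$ blocks --- and since $C$ is quasi-cyclic with shift $L$, both of these operations preserve $C$. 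I would verify the equality $\sigma(R(a)) = R(a+\mathbf{1})$ coordinatewise from the defining formula for $R(\cdot)$: $[R(a)]_{jL + a_j} = 1$ exactly when $a_j \neq \infty$, and applying $\sigma$ moves this $1$ to position $jL + (a_j + 1 \bmod L)$, which is precisely the support description of $R(a+\mathbf{1})$.

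With the single-step case in hand, the statement for arbitrary $j \in \mathbb{Z}_L$ follows by induction: $a + j\mathbf{1} = (a + (j-1)\mathbf{1}) + \mathbf{1}$ and $R\big((a+(j-1)\mathbf{1}) + \mathbf{1}\big) = \sigma\big(R(a+(j-1)\mathbf{1})\big) \in C$ by the inductive hypothesis together with the shift-invariance of $C$; alternatively one applies $\sigma^j$ directly, noting $\sigma^j$ is again a power of the quasi-cyclic shift. The main obstacle I anticipate is purely bookkeeping: making the identification between ``increment each block's marked position by $1$ mod $L$'' and ``the code's quasi-cyclic shift with parameter $L$'' airtight, in particular confirming that the length-$L$ shift parameter does not inadvertently cyclically permute the $n$ blocks in a way that conflicts with the blockwise description --- but since $C$ is closed under that shift regardless of how the blocks get relabeled, this does not affect membership. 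Care is also needed with the $\infty$ convention, but the extension $x + \infty = \infty$ is exactly engineered so that $\infty$-blocks (zero blocks) map to zero blocks under $\sigma$, so no special case arises.
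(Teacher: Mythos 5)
Your blockwise description of $R(a)$, the identity $R(a+\mathbf{1})=R(a)\cdot\mathrm{diag}(P,\ldots,P)$, and the induction on $j$ are correct and match the paper's argument, which likewise reduces to the observation that $R(a_i+j)=R(a_i)\cdot P^j$ block by block. The gap is in the one step that actually certifies membership in $C$: you claim the per-block cyclic shift $\sigma=\mathrm{diag}(P,\ldots,P)$ ``coincides with the quasi-cyclic shift with parameter $l=L$.'' It does not. In the coordinate ordering used by $R$ (position $jL+t$ is offset $t$ of block $j$), the global cyclic shift by $L$ positions sends $jL+t$ to $((j+1)\bmod n)L+t$: it rotates the $n$ blocks and acts as the \emph{identity} inside each block, whereas $\sigma$ fixes each block and rotates the offsets within it. These are different permutations, and neither is a power of the other for $L,n\geq 2$; your parenthetical ``up to the simultaneous rotation of the $n$ blocks'' resolves the discrepancy the wrong way, since modulo block rotations the shift by $L$ is the identity, not $\sigma$. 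Consequently, invariance of $C$ under the length-$L$ shift does not yield invariance under $\sigma$, and the central step of your proof is unsupported.

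What is actually needed, and what the paper implicitly uses, is invariance of $C$ under $\sigma$ itself, and this comes from the circulant-block structure of $H$ in Equation (\ref{eq:H}) rather than from the generic definition of quasi-cyclicity. Concretely: every block of $H$ is a power of $P$ or the zero block, and powers of $P$ commute, so $H(I_n\otimes P^{-1})=(I_m\otimes P^{-1})H$; hence if $Hc^T=0$ then $H(c\sigma)^T=(I_m\otimes P^{-1})Hc^T=0$, i.e., $\sigma$ preserves the null space of $H$. Equivalently, on the row-space side, the rows of the $i^{\text{th}}$ block-row of $H$ are exactly $R(A_i),R(A_i+\mathbf{1}),\ldots,R(A_i+(L-1)\mathbf{1})$, so the generating set is visibly closed under $v\mapsto v\sigma$. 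Replacing your identification of $\sigma$ with the length-$L$ shift by this observation repairs the argument; the rest of your proof then goes through as written.
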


\begin{proof}
    Observe that 
    $$R(a_i+j)=
    \begin{cases}
        R(a_i+j) & \text{if } a_i\neq \infty\\
        R(\infty)& \text{if } a_i=\infty.
    \end{cases}$$
    In either case, $R(a_i+j)=R(a_i) \cdot P$ and thus $R(a + j \mathbf{1})\in C$.
\end{proof}

Consider the operation $\ast:(\mathbb{Z}_L\cup\{\infty\})^2\rightarrow\mathbb{Z}_L\cup\{\infty\}$ defined by 
$$
a \ast b = 
\begin{cases}
a & \text{if } a=b \\ \infty & \text{otherwise}.
\end{cases}
$$ 
The operation extends naturally component-wise to $(\mathbb{Z}_L\cup\{\infty\})^n$. Moreover, there is a relationship between $\star$ and $\ast$, as shown in the next result. 

\begin{proposition}\label{03.03.24.P}
Given $a, b \in (\mathbb{Z}_L\cup\{\infty\})^n$,
$$R(a)\star R(b)=R(a\ast b).$$
\end{proposition}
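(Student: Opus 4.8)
The plan is to verify the identity one coordinate at a time. First I would fix an index $i \in \{0,1,\dots,Ln-1\}$ and write it uniquely as $i = jL + r$ with $j \in \{0,\dots,n-1\}$ and $r \in \{0,\dots,L-1\}$; this just records that the $Ln$ coordinates split into $n$ consecutive blocks of length $L$ indexed by $j$. With $i$ fixed, the defining clause ``$i = j'L + a_{j'}$ for some $j'$ with $a_{j'} \neq \infty$'' forces $j' = j$ and then reads ``$a_j \neq \infty$ and $a_j = r$''. Hence $[R(a)]_i = 1$ if and only if $a_j$ is finite and equal to $r$, and likewise $[R(b)]_i = 1$ if and only if $b_j$ is finite and equal to $r$.

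Next I would compute the $i$th entry of the Schur product: by definition $[R(a)\star R(b)]_i = [R(a)]_i \cdot [R(b)]_i$, which equals $1$ exactly when $a_j = b_j = r$ with both finite. By the definition of $\ast$, this is precisely the statement that $(a\ast b)_j = a_j \ast b_j$ is finite and equal to $r$; and by the definition of $R$ applied to the vector $a\ast b$, that in turn is precisely the statement that $[R(a\ast b)]_i = 1$. Since $i$ was arbitrary, the two vectors of $\F_2^{Ln}$ agree in every coordinate, so $R(a)\star R(b) = R(a\ast b)$.

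I do not expect a genuine obstacle here: the statement is a direct unwinding of the definitions of $R$, $\star$, and $\ast$. The only bookkeeping point is the block decomposition $i = jL+r$ together with the observation that, within a single block $j$, each of $R(a)$ and $R(b)$ has at most one nonzero entry, so distinct blocks do not interfere. Conceptually the cleanest way to phrase the whole argument is blockwise: restricted to block $j$, the row $R(a)$ is either the standard basis row $e_{a_j}^T$ (when $a_j\neq\infty$) or the zero row, and since $e_s \star e_t = e_s$ when $s=t$ and $0$ otherwise, the product in block $j$ is $e_{a_j}^T$ when $a_j = b_j \neq \infty$ and $0$ otherwise, which is exactly the block-$j$ part of $R(a\ast b)$. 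I would include this one-line remark as the summary of the verification.
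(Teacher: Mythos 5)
Your proof is correct and is essentially the paper's argument: the paper also works block by block, writing $R(a)$ as the concatenation of the length-$L$ pieces $R(a_i)=e_{a_i}$ (or the zero block when $a_i=\infty$) and using $e_s\star e_t=e_s$ if $s=t$ and $0$ otherwise, which is exactly the blockwise summary you give at the end. Your coordinate-level version with $i=jL+r$ is just a more explicit rendering of the same computation, and it handles the $\infty$ case a bit more carefully than the paper does.
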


\begin{proof}
    Let $a=(a_0,\ldots,a_{n-1})$, $b=(b_0,\ldots,b_{n-1})$ and $0\leq i\leq n-2$. We will focus on the entries of $R(a)$ indexed by $iL+(i+1)L-1$, meaning $R(a_i)$. Observe that $R(a_i)=e_{a_i}$, the $a_i^{th}$ standard basis vector in $\mathbb{F}_2^L$. Thus, $$R(a_i)\star R(b_i)=e_{a_i}\star e_{b_i}
    =\begin{cases}
        0 & \text{if } a_i\neq b_i \\
        e_{a_i} & \text{otherwise}.
    \end{cases}
           $$Thus, 
    $$R(a_i)\star R(b_i)=R(a_i\ast b_i).$$
    Since $R(a)$ is the concatenation of $R(a_i)$, we have the conclusion.
\end{proof}

\begin{proposition} Given the code $C^\perp$ defined by parity check matrix $H$ as in Equation (\ref{eq:H}) and its corresponding matrix $M_H$ with rows $A_1,\ldots, A_m$, its square 
    $(C^\perp)^2$ is generated by a matrix $H' \in \F_2^{mL \times nL}$ such that $M_{H'}  \in(\mathbb{Z}_L\cup\{\infty\})^{m\times n}$ has rows $A_i\ast (A_j+h\mathbf{1})$ for any $0\leq h\leq L-1.$
\end{proposition}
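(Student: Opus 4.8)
The plan is to reduce everything to Proposition~\ref{03.03.24.P} together with Lemma~\ref{03.03.24.L}, so that the statement about $(C^\perp)^2$ becomes a purely combinatorial statement about the $\ast$-operation on the rows of $M_H$. First I would recall that $C^\perp$ is the row space of $H$, so a generating set for $C^\perp$ is $\{\row_k H : k\}$, and by the block-circulant structure of $H$ each such row is exactly a shift $R(A_i + h\mathbf{1})$ for some $0 \le i \le m-1$ and some $0 \le h \le L-1$; here $A_i$ is the $i^{\text{th}}$ row of $M_H$, viewed in $(\Z_L \cup \{\infty\})^n$, and the shift by $h$ comes from moving down within the $i^{\text{th}}$ block row, using the observation $\row_i P^a = e_{a+i \bmod L}^T$ already noted in the text. (Lemma~\ref{03.03.24.L} guarantees $R(A_i + h\mathbf{1}) \in C^\perp$ for all $h$, which is what lets us list these as the generators.)

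Next I would use that $(C^\perp)^2$ is spanned by the pairwise Schur products of a spanning set of $C^\perp$, hence spanned by $R(A_i + h\mathbf{1}) \star R(A_j + h'\mathbf{1})$ over all $i,j,h,h'$. By Proposition~\ref{03.03.24.P} this equals $R\big((A_i + h\mathbf{1}) \ast (A_j + h'\mathbf{1})\big)$. The key elementary fact to check is that $(A_i + h\mathbf{1}) \ast (A_j + h'\mathbf{1})$ depends, up to an overall shift, only on the difference $h' - h$: indeed $\ast$ is componentwise, and for single entries $(a+h)\ast(b+h') = \infty$ unless $a + h = b + h'$, i.e.\ unless $a = b + (h'-h)$, in which case the common value is $a + h$; so $(A_i + h\mathbf{1})\ast(A_j+h'\mathbf{1}) = \big(A_i \ast (A_j + (h'-h)\mathbf{1})\big) + h\mathbf{1}$, with the convention $x + \infty = \infty$. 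Writing $h'' = h' - h \bmod L$, every generator of $(C^\perp)^2$ therefore has the form $R\big((A_i \ast (A_j + h''\mathbf{1})) + h\mathbf{1}\big)$, i.e.\ an $L$-shift of $R(A_i \ast (A_j + h''\mathbf{1}))$.

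Finally I would assemble the matrix $H'$: take its rows to be all vectors $R(A_i \ast (A_j + h\mathbf{1}))$ for $0 \le i,j \le m-1$ and $0 \le h \le L-1$ (collapsing to an $mL \times nL$ matrix after organizing the $i,j$ pairs into block rows and the shifts within blocks), and observe that, by the previous paragraph, its row space together with all $L$-shifts of its rows is exactly $(C^\perp)^2$; since $(C^\perp)^2$ is quasi-cyclic with shift $L$ (by the Proposition on quasi-cyclicity applied blockwise, or directly from Lemma~\ref{03.03.24.L}), these shifts are already in the span, so $H'$ generates $(C^\perp)^2$. The associated base matrix $M_{H'} \in (\Z_L \cup \{\infty\})^{m \times n}$ then has rows $A_i \ast (A_j + h\mathbf{1})$, as claimed.

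I expect the only real subtlety to be bookkeeping: making precise the claim that the rows of $H$ are exactly the shifts $R(A_i + h\mathbf{1})$ (carefully tracking the index arithmetic $\bmod L$ inside each block row), and then checking that nothing is lost when we pass from the full list of $m^2 L^2$ Schur products down to an $mL \times nL$ matrix — i.e.\ that every Schur product $R(A_i+h\mathbf{1})\star R(A_j+h'\mathbf{1})$ is a quasi-cyclic shift of some row we have kept. The shift-invariance computation for $\ast$ is the crux that makes this collapse work, and it is genuinely short; everything else is routine once Propositions~\ref{03.03.24.P} and the quasi-cyclicity of $(C^\perp)^2$ are in hand.
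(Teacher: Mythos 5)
Your proposal is correct and follows essentially the same route as the paper's proof: generate $C^\perp$ by the vectors $R(A_i+h\mathbf{1})$, apply Proposition~\ref{03.03.24.P} to the pairwise Schur products, establish the key identity $(A_i+h\mathbf{1})\ast(A_j+h'\mathbf{1})=\bigl(A_i\ast(A_j+(h'-h)\mathbf{1})\bigr)+h\mathbf{1}$ by the same componentwise case analysis, and then invoke Lemma~\ref{03.03.24.L} (equivalently, quasi-cyclicity) to absorb the residual overall shift. The only difference is presentational: you flag the bookkeeping about collapsing the $m^2L^2$ products down to the stored base-matrix rows a bit more explicitly than the paper does.
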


\begin{proof}
    We know that $C^\perp$ is generated by $R(A_i+j\mathbf{1})$ with $0\leq j\leq L-1$. Thus, by Proposition \ref{03.03.24.P}, the square is generated by
    $$R(A_{i_1}+{j_1}\mathbf{1})\star R(A_{i_2}+{j_2}\mathbf{1})=R((A_{i_1}+{j_1}\mathbf{1})\ast (A_{i_2}+{j_2}\mathbf{1})).$$

    The conclusion will follow if $w=(A_{i_1}+{j_1}\mathbf{1})\ast (A_{i_2}+{j_2}\mathbf{1})=A_{i_1}\ast(A_{i_2}+h\mathbf{1})+h'\mathbf{1}$ for some $h,h'\in\mathbf{Z}_L$. We claim that
    $$w=A_{i_1}\ast(A_{i_2}+(j_2-j_1)\mathbf{1})+j_1\mathbf{1}.$$

    Observe that if $(A_{i_1})_\nu+j_1=(A_{i_2})_\nu+j_2$ then either $(A_{i_1})_\nu=(A_{i_2})_\nu=\infty$ or $(A_{i_1})_\nu=(A_{i_2})_\nu+(j_2-j_1)$. In the first case, $w_\nu=\infty$ and in the second case $w_\nu=(A_{i_2})_\nu+j_2$. In any case, we have
    $$w_\nu=(A_{i_1})_\nu\ast ((A_{i_2})_\nu+(j_2-j_1))+j_1.$$
    
    On the other hand, if $(A_{i_1})_\nu+j_1\neq (A_{i_2})_\nu+j_2$, then necessarily $w_\nu=\infty$. If $(A_{i_1})_\nu\neq\infty$, then $(A_{i_1})_\nu\neq (A_{i_2})_\nu+(j_2-j_1)$ and
    $$w_\nu=(A_{i_1})_\nu\ast ((A_{i_2})_\nu+(j_2-j_1))+j_1.$$

    Similarly for $(A_{i_2})_\nu\neq \infty$ and then we have the conclusion.

    Since $R(w)\in (C^\perp)^{\ast 2}$, by Lemma \ref{03.03.24.L}, we have $R(w+h\mathbf{1})\in (C^\perp)^2$ for any $h$ and thus, we can store any of them to build $M_{H'}$, from where we have the conclusion by taking $h=-j_1\mathrm{mod}\ L$.
\end{proof}

\section{Code} \label{section:examples}
In this section, we present Magma~\cite{magma} code based on Theorem~\ref{03.03.24} and quasi-cyclic codes to find quantum LDPC or LDGM CSS-T codes computationally. We also describe the algorithms in case one wishes to use a different software, for instance Macaulay2~\cite{Mac2} along with the coding theory package~\cite{cod_package}.

We start by giving the algorithms to compute the Schur product between two matrices, the square (respect the Schur product) of a matrix, and the $H$ matrix given in Eq.~\ref{eq:H}.
\begin{algorithm}[ht!]
    \begin{algorithmic}[1]
        \Function{Pointwise}{$A$, $B$} \Comment{Returns the pointwise matrix between same-size matrices $A$ and $B$}
            \State $n \gets \left| \{\text{columns of $A$}\}\right|$
            \State $m \gets \left| \{\text{rows of $A$}\} \right|$
            \State $C \gets 0_{n \times m}$
            \For{$j\in [n]$, $i \in [m]$}
                \State $C[i,j] \gets A[i,j] * B[i,j]$
            \EndFor
            \State \textbf{return} $C$
        \EndFunction
    \end{algorithmic}
\end{algorithm}

\begin{algorithm}[ht!]
    \begin{algorithmic}[1]
        \Function{Square}{$A$} \Comment{Returns the (Schur) square of $A$}
            \State $n \gets \left| \{\text{columns of $A$}\}\right|$
            \State $m \gets \left| \{\text{rows of $A$}\} \right|$
            \State $C \gets 0_{m^2 \times n}$
            \State $\ell \gets 1$
            \For{$i, j \in [m]$}
                \State $C[\ell] \gets$ Pointwise(Row $i$ of $A$, Row $j$ of $A$)
                \State $\ell \gets \ell + 1$
            \EndFor
            \State \textbf{return} $C$
        \EndFunction
    \end{algorithmic}
\end{algorithm}

\rmv{\begin{example} The code
    $C_1$ is a $[15,9,2]$ binary linear code with generator matrix $G_1 =$
    \[ \begin{bmatrix}
        1 & 0 & 0 & 0 & 0 & 0 & 0 & 0 & 0 & 1 & 0 & 0 & 1 & 0 & 0 \\
        0 & 1 & 0 & 0 & 0 & 0 & 0 & 0 & 0 & 1 & 0 & 0 & 0 & 1 & 0 \\
        0 & 0 & 1 & 0 & 0 & 0 & 0 & 0 & 0 & 1 & 0 & 0 & 0 & 0 & 1 \\
        0 & 0 & 0 & 1 & 0 & 0 & 0 & 0 & 0 & 1 & 1 & 0 & 0 & 0 & 0 \\
        0 & 0 & 0 & 0 & 1 & 0 & 0 & 0 & 0 & 1 & 0 & 1 & 0 & 0 & 0 \\
        0 & 0 & 0 & 0 & 0 & 1 & 0 & 0 & 0 & 1 & 0 & 0 & 0 & 0 & 0 \\
        0 & 0 & 0 & 0 & 0 & 0 & 1 & 0 & 0 & 1 & 0 & 0 & 0 & 0 & 0 \\
        0 & 0 & 0 & 0 & 0 & 0 & 0 & 1 & 0 & 1 & 0 & 0 & 0 & 0 & 0 \\
        0 & 0 & 0 & 0 & 0 & 0 & 0 & 0 & 1 & 1 & 0 & 0 & 0 & 0 & 0
    \end{bmatrix}. \]
   Its dual $C_1^\perp$ is a $[15, 4, 4]$ binary linear code with generator matrix $H_1$ below (equivalently, $C_1$ has parity check matrix $H_1$) where $H_1 =$
    \[  \begin{bmatrix}
        1 & 0 & 0 & 0 & 0 & 0 & 0 & 0 & 0 & 0 & 0 & 0 & 1 & 0 & 0 \\
        0 & 1 & 0 & 0 & 0 & 0 & 0 & 0 & 0 & 0 & 0 & 0 & 0 & 1 & 0 \\
        0 & 0 & 1 & 0 & 0 & 0 & 0 & 0 & 0 & 0 & 0 & 0 & 0 & 0 & 1 \\
        0 & 0 & 0 & 1 & 0 & 0 & 0 & 0 & 0 & 0 & 1 & 0 & 0 & 0 & 0 \\
        0 & 0 & 0 & 0 & 1 & 0 & 0 & 0 & 0 & 0 & 0 & 1 & 0 & 0 & 0 \\
        0 & 0 & 0 & 0 & 0 & 1 & 1 & 1 & 1 & 1 & 1 & 1 & 1 & 1 & 1 
    \end{bmatrix}.\]
    Finally, $C_1 \cap (C_1^2)^\perp$ is a $[15,4,4]$ binary linear code with has generator matrix $H_2$ (equivalently, $C_1 \cap (C_1^2)$ has parity check matrix $H_2$) where $H_2=$
    \[  \begin{bmatrix}
        1 & 0 & 0 & 0 & 1 & 0 & 0 & 0 & 0 & 0 & 0 & 1 & 1 & 0 & 0 \\
        0 & 1 & 0 & 0 & 1 & 0 & 0 & 0 & 0 & 0 & 0 & 1 & 0 & 1 & 0 \\
        0 & 0 & 1 & 0 & 1 & 0 & 0 & 0 & 0 & 0 & 0 & 1 & 0 & 0 & 1 \\
        0 & 0 & 0 & 1 & 1 & 0 & 0 & 0 & 0 & 0 & 1 & 1 & 0 & 0 & 0 
    \end{bmatrix}.\]
\end{example}}




We now present the Magma code that can be used to find LDPC or LDGM quantum CSS-T codes.
The following function returns the Schur product between same-size matrices $A$ and $B$.
\begin{verbatim}
function Pointwise(A,B)
 return Matrix([[A[i,j]*B[i,j] : 
 j in [1..Ncols(A)]] : 
 i in [1..Nrows(A)]]);
end function;
\end{verbatim}

The following function returns the square of the matrix $A$ (in terms of the Schur product).
\begin{verbatim}
function Square(A)
 return Matrix([Pointwise(RowSubmatrix
 (A, i, 1),RowSubmatrix(A, j, 1))
 : i,j in [1..Nrows(A)]]);
end function;
\end{verbatim}

The following function returns the quasi-cyclic low-density matrix $H$ defined by the integer $L$ and the matrix $A$ (see Eq.~\ref{eq:H}) that can be used as generator matrix for a QC-LDGM code or parity-check matrix for a QC-LDPC code.


\begin{verbatim}
function QCLD(L,A)
 P:=ZeroMatrix(FiniteField(2), L, L);
 InsertBlock(~P, IdentityMatrix 
 (FiniteField(2), L-1), 1, 2);
 P[L,1]:=FiniteField(2)!1;
 
 H:=ZeroMatrix(FiniteField(2), 
 Nrows(A)*L, Ncols(A)*L);
 for i in [0..Nrows(A)-1] do
  for j in [0..Ncols(A)-1] do
  InsertBlock(~H,P^A[i+1,j+1], 
  i*L+1, j*L+1);
  end for;
 end for;
 return H;
end function;
\end{verbatim}
\begin{example}
We will use the previous functions to generate a CSS-T code. Specifically, we use the QCLD function to provide a sparse generator matrix.

\begin{verbatim}
L:=4;
A:= Matrix(IntegerRing(), 
2, 4, [3,1,2,1,  3,3,2,3] );

G:=QCLD(L,A);
C1:=LinearCode(G);
C2:=C1 meet Dual(LinearCode
(Square(GeneratorMatrix(C1))));
C1;
Dual(C2);
\end{verbatim}
Then $C_1$ is a $[16, 6, 4]$ binary code with generator matrix\newline
$G_1 = $
\[
\begin{bmatrix}
1&0&0&0&0&0&1&0&0&0&0&1&0&0&1&0\\
0&1&0&0&0&0&0&1&1&0&0&0&0&0&0&1\\
0&0&1&0&0&0&1&0&0&1&0&0&0&0&1&0\\
0&0&0&1&0&0&0&1&0&0&1&0&0&0&0&1\\
0&0&0&0&1&0&1&0&0&0&0&0&1&0&1&0\\
0&0&0&0&0&1&0&1&0&0&0&0&0&1&0&1
    \end{bmatrix}.
\]
Moreover, $$C_2 = \hull (C_1) \cap \hull (C_1^2) =C_1$$ and $C_2^\perp$ is a $[16, 10, 2]$ binary code. So, by Corollary~\ref{C:dmin2perp}  and Theorem~\ref{CSSTHull}, the quantum code $Q(C_1,C_2)$ is a
$[[16,0,\ge 2]]$ CSS-T code.
\end{example}

As an additional example of the techniques introduced in this paper, we provide the following.

\begin{example}\rm
    Let $C_1$ and $C_2$ be defined by the generator matrices
    $$\begin{bmatrix}
         1&1&1&1&1&1&1&1&1&1&1&1&1&1&1&1\\
         1&1&1&1&1&1&1&1&0&0&0&0&0&0&0&0\\
         1&1&1&1&0&0&0&0&1&1&1&1&0&0&0&0\\
         1&1&0&0&1&1&0&0&1&1&0&0&1&1&0&0\\
         1&0&1&0&1&0&1&0&1&0&1&0&1&0&1&0\\
         \end{bmatrix}$$
    and 
    $$\begin{bmatrix}
         1&1&1&1&1&1&1&1&0&0&0&0&0&0&0&0\\
         1&1&1&1&0&0&0&0&1&1&1&1&0&0&0&0\\
         1&1&0&0&1&1&0&0&1&1&0&0&1&1&0&0\\
         1&0&1&0&1&0&1&0&1&0&1&0&1&0&1&0\\
         \end{bmatrix},$$
     respectively. We can check that $(C_1,C_2)$ is a CSS-T using Theorem~\ref{CSSTHull} and \cite{magma}. As the code $C_2$ is degenerated with respect to the last column, we can puncture that column to obtain that $((C_1)|_{15},(C_2)_{15})$ is also a CSS-T pair by Theorem~\ref{03.03.24}. The minimum distance of $((C_2)|_{15})^\perp$ is 3 and then we get a CSS-T code with parameters $[[15,1,3]]$.
\end{example}

\section{Conclusion} \label{section:conclusion}
In this paper, we provided a characterization of CSS-T codes using the relative hull of a code with respect to its square. We proved that under certain conditions, we can puncture the component codes of a CSS-T pair to obtain another CSS-T pair. We considered the use of quasi-cyclic codes to design LDPC and LDGM quantum CSS-T codes computationally. Toy examples were given as a proof of concept, demonstrating a possible step towards obtaining CSS-T codes using the characterization.

\bibliographystyle{IEEEtran}
\bibliography{bibliography}

\end{document}